\pdfoutput=1
\documentclass{amsart}

\usepackage{amsmath,amssymb}
\usepackage{microtype}
\usepackage{extarrows}
\usepackage{tikz}
\usetikzlibrary{decorations.markings}
\usepackage {mathpartir}
\usepackage{stmaryrd}
\usepackage{amsfonts}
\usepackage [all,2cell,pdf,v2]{xy}
\UseAllTwocells
\usepackage{bbold}
\usepackage{hyperref}

\newtheorem{thm}{Theorem}[section]
\newtheorem{prop}[thm]{Proposition}
\newtheorem{example}[thm]{Example}
\theoremstyle{definition}
\newtheorem{lemma}[thm]{Lemma}
\newtheorem{prob}[thm]{Problem}
\newtheorem{constrInternal}[thm]{Construction}
\newenvironment{construction}[1]
      {\begin{constrInternal}[{for Problem~\ref{#1}}]}
      {\qed\end{constrInternal}}

\newcommand{\op}{\mathsf{op}}
\newcommand{\ops}{\mathit{op}}
\newcommand{\core}{\mathsf{core}}
\newcommand{\cores}{\mathit{core}}
\newcommand{\homs}{\mathit{hom}}

\newcommand{\Id}{\mathsf{Id}}
\newcommand{\homt}{\mathsf{hom}}

\newcommand{\C}{\mathcal{C}}
\newcommand{\T}{\mathcal{T}}
\newcommand{\D}{\mathcal{D}}
\newcommand{\E}{\mathcal{E}}
\newcommand{\ob}{\text{ob}}
\newcommand{\dom}{\normalfont\textsc{dom}}
\newcommand{\cod}{\textsc{cod}}

\newcommand{\two}{\mathbb{2}}
\newcommand{\Cat}{{\mathcal{C}at}}
\newcommand{\Set}{{\mathcal{S}et}}
\newcommand{\CAT}{\mathcal{CAT}}
\newcommand{\type}{\textsf{\scriptsize{TYPE}}}
\newcommand{\sci}{\mathsf{i}}

\newcommand{\scer}{\mathsf{e_R}}
\newcommand{\scel}{\mathsf{e_L}}


\title{Towards a directed homotopy type theory}

\date{\today}

\author{Paige Randall North}

\begin{document}

\maketitle

\begin{abstract}
In this paper, we present a directed homotopy type theory for reasoning synthetically about (higher) categories, directed homotopy theory, and its applications to concurrency. 

We specify a new `homomorphism' type former for Martin-L\"of type theory which is roughly analogous to the identity type former originally introduced by Martin-L\"of. The homomorphism type former is meant to capture the notions of morphism (from the theory of categories) and directed path (from directed homotopy theory) just as the identity type former is known to capture the notions of isomorphism (from the theory of groupoids) and path (from homotopy theory). 

Our main result is an interpretation of these homomorphism types into $\Cat$, the category of small categories. There, the interpretation of each homomorphism type $\homt_\C(a,b)$ is indeed the set of morphisms between the objects $a$ and $b$ of the category $\C$.

We end the paper with an analysis of the interpretation in $\Cat$ with which we argue that our homomorphism types are indeed the directed version of Martin-L\"of's identity types. 
 \end{abstract}
\tableofcontents
\section{Introduction}

Martin-L\"of type theory, together with its identity type, is now often described as a synthetic theory of higher groupoids. This rich structure of the identity type was first discovered by Hofmann and Streicher in their disproof of the \emph{uniqueness of identity proofs} \cite{HS96}. There, the authors construct an interpretation of Martin-L\"of's identity type into the category of groupoids where for any terms $a,b$ in a type $\mathcal G$, the identity type $\Id_{\mathcal G}(a,b)$ is interpreted to be the set of morphisms $a \to b$ in the groupoid interpreting $\mathcal G$.

The full extent of this structure was later made explicit in Voevodsky's interpretation of Martin-L\"of type theory with the identity type (amongst other types) into the category of Kan complexes, the objects of which represent spaces or $\infty$-groupoids \cite{kl12}. In this interpretation, the identity type $\Id_K(a,b)$ of any two points $a,b$ in a Kan complex $K$ is itself a Kan complex, or space.

Following these concrete interpretations, others (e.g. \cite{War08, GG08, Nor}) have compared Martin-L\"of's specification of the identity types with the part of Quillen's axiomatization of homotopy theory that is today called the theory of weak factorization systems \cite{Qui67, Bro73}. In \cite{Nor}, the author has shown that in any finitely complete category, models of Martin-L\"of's identity type are in correspondence with weak factorization systems with two stability properties.

Such results describe a fascinating perspective on the theory of higher groupoids. However, one might argue that this theory is very well understood already by topologists since it has concrete models which have been studied for the better part of a century.
However, if one thinks of the theory of higher groupoids as being a theory of (reversible) paths, then one might wonder whether there is also a type theory which describes a theory of irreversible, or \emph{directed}, paths. One might hope that such a type theory would shed light on the less well understood theories of higher categories and directed spaces. Both of these theories have a notion of directed paths at their core, and for both of these theories, there are many competing concrete models. Thus, having a synthetic theory of directed paths might shed light on this situation. 

In this paper, we present the core of such a type theory. Our long-term goal is to use this type theory to prove and verify theorems about higher categories, directed homotopy theory, and concurrency.

\subsection{Previous work}

Several other type theories with semantics in category theories have been proposed. In \cite{HL}, the authors construct a type theory that gives a synthetic theory of $1$-categories. In \cite{RS}, the authors construct a type theory that gives a theory of $(\infty, 1)$-categories. Other works-in-progress \cite{Nuyts, War} do hope to give a theory of $\infty$-categories. 

In this paper, we aim to give a syntax which not only describes $\infty$-categories but also directed spaces. We have three design criteria: (1) that our notion of directed paths is introduced via a type former and not as a judgement, so that it can be iterated as Martin-L\"of's identity type can be, (2) that our theory allows us to transport terms of dependent types along directed paths, and (3) that our theory does \emph{not} always allow us to transport terms of the type of directed paths along identities, since such a transportation does not hold in categories of directed spaces. 

The syntaxes of \cite{HL} and \cite{Nuyts} do not meet our first criterion since their directed paths are introduced as a judgement and not as a type former. This seems to make the theory somewhat unwieldy: in \cite{HL}, only the theory of $1$-categories is introduced and \cite{Nuyts} includes no semantics. The theory of \cite{War} does not meet our second criterion, and the theory of \cite{RS} does not meet our third criterion.

\subsection{Organization}

In Section \ref{sec:syntax}, we present the syntax for our type theory, which consists of one new type former $\homt$ within Martin-L\"of type theory. In Section \ref{sec:semantics}, we interpret the rules of this new type former $\homt$ within a comprehension category derived from the category $\Cat$ of small categories. Even though the objects of $\Cat$ are simpler than those of our intended semantics (higher categories and directed spaces), we see this as an important test case since its theory is much better understood than those of of our intended semantics.
In Section \ref{sec:dhott}, we briefly describe directed homotopy theory and its connection with concurrency, and we sketch an interpretation of our type former $\homt$ in a category of directed spaces.

\section{The syntax}
\label{sec:syntax}

In this section, we define a type of homomorphisms within standard Martin-L\"of type theory \cite{Mar82}. We only work in the bare type theory. That is, we assume that we have all the `general rules' of \cite{Mar82}, but none of the type-forming rules given there.

First, we require that for every type $T$, there are two new types
\[
\inferrule
  { \Gamma \vdash T \ \type}
  { \Gamma \vdash T^{\core} \ \type}
   \hspace{3em}
\inferrule
{ \Gamma \vdash T \ \type}
{ \Gamma \vdash T^{\op} \ \type}
\]
 together with functions.
\[
\inferrule
  { \Gamma \vdash T \ \type \\ \Gamma \vdash t: T^{\core}}
  { \Gamma \vdash \sci (t): T}
   \hspace{3em}
\inferrule
{ \Gamma \vdash T \ \type \\ \Gamma \vdash t: T^{\core}}
{ \Gamma \vdash \sci^\op (t): T^\op}
\]

Now we can define our type of homomorphisms.

\begin{align*}
&\homt \textsc{ formation}: & & \homt \textsc{ introduction}:  \\
&\quad \inferrule
{ \Gamma \vdash T \ \type \\  \Gamma \vdash s : T^\op \\ \Gamma \vdash t : T}
{\Gamma \vdash \homt_T(s,t) \ \type} 
& & \quad \inferrule
{ \Gamma \vdash T \ \type \\  \Gamma \vdash t : T^\core }
{\Gamma \vdash 1_t : \homt_T(\sci^\op t,\sci t) \ \type} 
\end{align*}
\begin{align*}
& \textsc{right } \homt \textsc{ elimination}: \\
& \quad \inferrule
{ \Gamma \vdash T \ \type \\  
\Gamma, s: T^\core \vdash \Theta(s) \ \type \\ 
\Gamma, s: T^\core, t: T , f: \homt_T(\sci^\op s, t), \theta: \Theta(s) \vdash D(f,\theta) \ \type \\ 
\Gamma, s:  T^\core, \theta : \Theta(s) \vdash d(\theta) : D(1_s, \theta) 
}
{\Gamma, s: T^\core, t: T , f: \homt_T(\sci^\op s, t), \theta:\Theta(s) \vdash \scer(d,f,\theta) : D(f,\theta)}
\end{align*}
\begin{align*}
& \textsc{left }\homt \textsc{ elimination}: \\
& \quad
 \inferrule
{ \Gamma \vdash T \ \type \\  
\Gamma, s: T^\core  \vdash \Theta(s) \ \type\\
\Gamma, s: T^\op, t: T^\core , f: \homt_T( s, \sci t), \theta: \Theta(t) \vdash D(f, \theta) \ \type \\ 
\Gamma, s:  T^\core, \theta: \Theta(s) \vdash d(\theta) : D(1_s, \theta) 
}
{\Gamma, s: T^\op, t: T^\core , f: \homt_T( s,\sci t) , \theta: \Theta(t) \vdash \scel(d,f,\theta) : D(f,\theta)} 
\end{align*}
\begin{align*}
& \textsc{right } \homt \textsc{ computation}: \\
& \quad \inferrule
{ \Gamma \vdash T \ \type \\  
\Gamma, s: T^\core \vdash \Theta(s) \ \type \\ 
\Gamma, s: T^\core, t: T , f: \homt_T(\sci^\op s, t), \theta: \Theta(s) \vdash D(f,\theta) \ \type \\ 
\Gamma, s:  T^\core, \theta : \Theta(s) \vdash d(\theta) : D(1_s, \theta) 
}{
\Gamma, s:  T^\core, \theta : \Theta(s)  \vdash \scer(d,1_s,\theta ) \equiv d(\theta) : D(1_s,\theta)} 
\end{align*}
\begin{align*}
& \textsc{left }\homt \textsc{ computation}: \\
& \quad
 \inferrule
{ \Gamma \vdash T \ \type \\  
\Gamma, s: T^\core  \vdash \Theta(s) \ \type\\
\Gamma, s: T^\op, t: T^\core , f: \homt_T( s, \sci t), \theta: \Theta(t) \vdash D(f, \theta) \ \type \\ 
\Gamma, s:  T^\core, \theta: \Theta(s) \vdash d(\theta) : D(1_s, \theta) 
}
{\Gamma, s:  T^\core,  \theta: \Theta(s)  \vdash \scel(d,1_s,\theta) \equiv d(\theta) : D(1_s,\theta )} 
\end{align*}

\subsection{Transportation and composition}

Before we develop semantics for these rules, we prove some illustrative results in the syntax. 

First, we show that we can transport types along homomorphisms, which is analogous to the transportation along identities in homotopy type theory.

\begin{prop}\label{transport}
For any type $\Gamma, t:T \vdash S(t) \ \type$, we have a dependent term as follows.
\[\Gamma, t:T^\core, t':T, f: \homt_T(\sci^\op t,t'), s:S(\sci t) \vdash \mathsf{transport_R}(s,f):S(t') \]
\end{prop}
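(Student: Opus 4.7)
The plan is to apply the right $\homt$ elimination rule, following the same strategy one uses to define transport along identities in standard homotopy type theory via path induction. The key is to choose the motive and base case so that the rule's input and output match the shape of our goal.

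First, I would rename the bound variables in the right $\homt$ elimination rule so as not to clash with the variable $t$ used in the statement of $S$. So let me use $u : T^\core$ and $v : T$ for the endpoint variables in the rule, and $g : \homt_T(\sci^\op u, v)$ for the homomorphism variable. Then I instantiate the rule with $\Theta(u) := S(\sci u)$ (which is well-formed in context $\Gamma, u : T^\core$ since $\sci u : T$), and with motive $D(g, \theta) := S(v)$, which is well-formed in the context $\Gamma, u : T^\core, v : T, g : \homt_T(\sci^\op u, v), \theta : \Theta(u)$ simply by weakening.

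For the base case, I must supply a term $d(\theta) : D(1_u, \theta)$ in context $\Gamma, u : T^\core, \theta : \Theta(u)$. Unfolding, this is a term of type $S(\sci u)$ given $\theta : S(\sci u)$, so I take $d(\theta) := \theta$. The right $\homt$ elimination rule then yields
\[
\Gamma, u : T^\core, v : T, g : \homt_T(\sci^\op u, v), \theta : \Theta(u) \vdash \scer(d, g, \theta) : S(v).
\]
Relabelling $u \mapsto t$, $v \mapsto t'$, $g \mapsto f$, and $\theta \mapsto s$, this is exactly the judgement required by the proposition, so setting $\mathsf{transport_R}(s, f) := \scer(d, f, s)$ gives the desired term.

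I do not anticipate any real obstacle here: the only content of the argument is the choice of motive $D(g,\theta) := S(v)$ and the observation that the base case reduces to the identity function on $S(\sci u)$. This mirrors exactly the HoTT proof that $\Id_A(a,b)$ induces transport of fibers, just with the right-handed $\homt$ elimination standing in for $J$.
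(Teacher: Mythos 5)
Your proposal is correct and matches the paper's own proof essentially step for step: both instantiate the right $\homt$ elimination rule with $\Theta(t):=S(\sci t)$, motive $D:=S(t')$ obtained by weakening, and base case the identity $d(\theta):=\theta$, yielding $\mathsf{transport_R}(s,f):=\scer(d,f,s)$. The only differences are your (sensible) variable renaming and explicit mention of weakening.
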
 
\begin{proof}
By plugging the data
\begin{itemize}
\item $\Gamma, \vdash T \ \type$
\item $\Gamma, t: T^\core \vdash s:S(\sci t)$
\item $\Gamma, t:T^\core, t':T, f: \homt_T(\sci^\op t,t'), s:S(\sci t) \vdash S(t') \ \type$
\item $\Gamma, t: T^\core, s:S(\sci t)  \vdash s:S(\sci t) $
\end{itemize}
into the right $\homt$ computation rule, we find a term 
\[\Gamma, t:T^\core, t':T, f: \homt_T(\sci^\op t,t'), s:S(\sci t) \vdash \scer(\lambda s. s, f,s):S(t') \]
which we abbreviate by $\mathsf{transport_R}(s,f)$ (where that $\lambda s. s$ is merely abbreviation for the dependent term $\Gamma, t: T^\core, s:S(\sci t)  \vdash s:S(\sci t) $, and not a term constructor in our language.)
\end{proof}

Now using this transportation, we can show that we can compose two homomorphisms.

\begin{prop}\label{comp}
For any type $\vdash T \ \type$, we have a dependent term as follows.
\[r:T^\op, s:T^\core,t:T, f: \homt_T(r,\sci s), g: \homt_T(\sci^\op s, t)
\vdash \mathsf{comp_R}(f,g):\homt_T(r,t) \]
\end{prop}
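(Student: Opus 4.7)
The plan is to apply the right $\homt$ elimination rule to the homomorphism $g$, treating $f$ as auxiliary data carried by the $\Theta$ parameter. This mirrors the classical trick for proving transitivity of identity types in Martin-L\"of type theory by applying $J$-elimination to the second path.

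Concretely, I would absorb $r: T^\op$ into the ambient context $\Gamma$ of the elimination rule and take
\[\Gamma, s: T^\core \vdash \Theta(s) := \homt_T(r, \sci s) \ \type\]
together with the motive
\[\Gamma, s: T^\core, t: T, g: \homt_T(\sci^\op s, t), \theta: \Theta(s) \vdash D(g, \theta) := \homt_T(r, t) \ \type.\]
For the base case we need a term of type $D(1_s, \theta)$; since $1_s : \homt_T(\sci^\op s, \sci s)$ forces $t$ to be $\sci s$, this type is simply $\homt_T(r, \sci s) = \Theta(s)$, so we may take $d(\theta) := \theta$. Applying the right $\homt$ elimination rule then produces
\[r: T^\op, s: T^\core, t: T, g: \homt_T(\sci^\op s, t), f: \homt_T(r, \sci s) \vdash \scer(\lambda \theta.\, \theta,\, g,\, f) : \homt_T(r, t),\]
which we abbreviate $\mathsf{comp_R}(f, g)$.

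I do not expect a serious obstacle here. The only things to check are routine: that $\Theta$ and $D$ are well-formed in the contexts the elimination rule demands (which holds because $r$ has been placed in $\Gamma$), and that $D(1_s, \theta)$ really coincides with the type of $\theta$ so that $d(\theta) := \theta$ is admissible as the base case, which it does by inspection. Once these line up, the elimination rule delivers the composite directly.
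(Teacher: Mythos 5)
Your proof is correct and is essentially the paper's argument: the paper obtains $\mathsf{comp_R}$ by instantiating Proposition \ref{transport} with $S(t') := \homt_T(r,t')$, and Proposition \ref{transport} is itself proved by exactly the instantiation of right $\homt$ elimination you write down ($\Theta(s) := \homt_T(r,\sci s)$, $D := \homt_T(r,t)$, $d(\theta) := \theta$), so your proof is just that argument inlined. The only (harmless) difference is the reordering of $f$ and $g$ in the context, an exchange the paper also performs silently.
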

\begin{proof}
Using Proposition \ref{transport}, we obtain a dependent term
\[r:T^\op, s:T^\core,t:T, f: \homt_T(r,\sci s), g: \homt_T(\sci^\op s, t)
\vdash \mathsf{transport_R}(f,g):\homt_T(r,t) \]
which we rename $\mathsf{comp_R}(f,g)$.
\end{proof}

By the right $\homt$ computation rule, we find that for any $f: \homt_T(r,\sci s)$, the composition $\mathsf{comp_R}(f,1_s) $ is strictly equal to $ f: \homt_T(r,\sci s)$. Using left $\homt$ elimination, we could analogously construct a left transportation $\mathsf{transport_L}$ and a left composition $\mathsf{comp_L}$ such that for any $g: \homt_T(\sci^\op s,t)$, the composition $\mathsf{comp_L}(1_s,g)$ is strictly equal to $g$ in the type $\homt_T(\sci^\op s,t)$.

\section{The semantics in $\Cat$}
\label{sec:semantics}

In this section, we construct an interpretation in the category of categories.

We work in a set theory with two uncountable inaccessible cardinals $\kappa < \lambda$. Let $\Cat$ denote the category of $\kappa$-small categories, and let $\CAT$ denote the $2$-category of $\lambda$-small categories. We have arranged this so that $\CAT$ contains $\Cat$ and all the objects of $\Cat$. Note that there are no universes in our syntax, so this is as much of a tower of universes in the semantics as we need.

\subsection{Grothendieck (op)fibrations}

The notions of Grothendieck fibrations and opfibrations play a central role in this interpretation, so we briefly review it here. See \cite{elephant} for more details.

Consider a functor $P: \mathcal E \to \mathcal B$ in $\CAT$. We say that a morphism $e: E' \to E$ in $\mathcal E$ is \emph{cartesian} if for any other morphism $e': E'' \to E$ in $\mathcal E$, any factorization of $P(e')$ through $P(e)$ (that is, a morphism $b: P(E'') \to P(E')$ such that $P(e) \circ b = P(e')$) uniquely lifts to a factorization of $e'$ through $e$ (that is, there is a unique morphism $\ell: E'' \to E'$ such that $e \circ \ell = e'$ and $P(\ell) = b$).
\[ \diagram 
\mathcal E \ar[d]^-P & E'' \ar@{-->}[r]_{\exists ! \ell} \ar@/^7pt/[rr]^{e'} & E' \ar[r]_-e & E \\
 \mathcal B & P(E'') \ar[r]^{b} \ar@/_7pt/[rr]_{P(e')}& P(E') \ar[r]^-{P(e)} & E 
\enddiagram \]

The archetypical cartesian morphisms are pullback squares, in the following sense. First, let $\two$ denote the poset $0 \leq 1$. Then $\C^\two$ is the category of morphisms of a category $\C$. Let $\cod :\C^\two \to \C$ be the functor which sends each morphism to its codomain. Now consider any category $\C$ in $\CAT$ with all pullbacks. Then a morphism in $\C^\two$ whose underlying diagram in $\C$ is a pullback square is a cartesian morphism with respect to the functor $\cod$.
 
A functor $P: \mathcal E \to \mathcal B$ in $\CAT$ is a \emph{Grothendieck fibration} if for any object $E$ of $\E$, any morphism $b: B \to P(E)$ in $\mathcal B$ can be lifted to a cartesian morphism $e: E' \to E$ (so that $P(e) = b$).

The archetypical Grothendieck fibration is a functor of the form $\cod :\C^\two \to \C$ for any category $\C$ with all pullbacks. This is because for any object $f$ in $\C^\two$ (which is a morphism $f: X \to Y$ in $\C$) and  morphism $g: Z \to \cod(f)$, the pullback square $\pi: g^*f \to f$ is a cartesian morphism in $\C^\two$ which is a lift of $g$.

Dually, we can define cocartesian morphisms and Grothendieck opfibrations. For a functor $P: \mathcal E \to \mathcal B$ in $\CAT$, a cocartesian morphism in $\mathcal E$ is a cartesian morphism in $\mathcal E^\ops$ with respect to the functor $P^\ops$. Then our archetypical cocartesian morphisms are pushout squares in any $\C^\two$ with respect to the functor $\dom :\C^\two \to \C$. A \emph{Grothendieck opfibration} is a functor $P: \mathcal E \to \mathcal B$ in $\CAT$ such that $P^\ops$ is a Grothendieck fibration. Our archetypical Grothendieck opfibration is $\dom :\C^\two \to \C$ for any category $\C$ with all pushouts.

One reason that Grothendieck fibrations have been of interest is that they are equivalent to pseudofunctors into $\Cat$. However, we will not make full use of that equivalence here, but we will make use of the \emph{Grothendieck construction}. Consider any category $\C$ in $\Cat$, and any functor $F: \C \to \Cat$. We can construct, following Grothendieck, a new category $\C.F$ as follows. The objects of $\C.F$ are pairs $(X, Y)$ such that $X$ is an object of $\C$ and $Y$ is an object of $F(X)$. Morphisms $(X, Y) \to (X',Y')$ in $\C.F$ are pairs $(f,g)$ such that $f: X \to X'$ is a morphism in $\C$ and $g: F(f)(Y) \to Y'$ is a morphism in $F(X')$. Then there is a projection functor $\pi_\C: \C.F \to \C$ which takes each pair $(X, Y)$ to $X$. The functor $\pi_\C$ is a Grothendieck opfibration since given any $(X,Y)$ in $\C.F$, we can lift any morphism $f: X \to X'$ in $\C$ to the cocartesian morphism $(f,1_{F(Y)}):(X,Y) \to (X', F(Y))$ in $\C.F$. 

Dually, given any functor of the form $F: \C^\op \to \Cat$, one can construct a Grothendieck fibration $\pi_\C: \C.F \to \C$. The objects of $\C.F$ are still pairs $(X, Y)$ such that $X$ is an object of $\C$ and $Y$ is an object of $F(X)$, but the morphisms $(X, Y) \to (X',Y')$ of $\C.F$ are pairs $(f,g)$ such that $f: X \to X'$ is a morphism in $\C$ and $g: Y \to F(f)(Y')$ is a morphism in $F(X)$. 

\subsection{The comprehension category}

A comprehension category over $\Cat$ (following \cite{Jac93}) consists of a category $\E$, a functor $\chi: \E \to \Cat^\two$, and a Grothendieck fibration $G: \E \to \Cat$ in $\CAT$ making the following diagram commute and such that $\chi$ sends to cartesian morphisms in $\mathcal E$ to cartesian morphisms in $\Cat^\two$ (that is, pullback squares).
\[ \diagram \E \ar[rr]^\chi \ar[rd]_G& & \Cat^\two \ar[dl]^\cod \\
& \Cat
\enddiagram \]

When interpreting a type theory into this framework, we take the objects of $\Cat$ to be our collection of contexts, the morphisms of $\Cat$ to be context morphisms, and each fiber $G^{-1} (\Gamma)$ to be the collection of types dependent on the context $\Gamma$. The terminal category $*$ is the interpretation of the empty context.
By applying the functor $\chi$ to an object $T$ of $G^{-1} (\Gamma)$, we obtain a morphism $\chi(T)$ in $\Cat$ with codomain $\Gamma$; the domain of $\chi(T)$ is the context extension (or comprehension) of $\Gamma$ by $T$. Thus, terms in a type $T$ dependent on a context $\Gamma$ are interpreted by sections of $\chi(T)$. 

\begin{prob}\label{compcat}
To construct a comprehension category over $\Cat$.
\end{prob}

\begin{example}
Before we construct the comprehension category, we indicate what $G^{-1} (*)$ (the types in the empty context) will be. This will be the category of functors $* \to \Cat$, which is isomorphic to $\Cat$ itself. The comprehension functor $\chi$ will take a functor of the form $\C: * \to \Cat$ to a functor which is isomorphic to the unique functor $\C(*) \to *$.
Thus, terms of $\C$ are functors $* \to \C(*)$, which are just objects of the category $\C(*)$.
\end{example}

\begin{construction}{compcat}
Consider the functor 
\[[-,\Cat]: \Cat^\ops \to \Cat\]
 which takes each object $\Gamma$ in $\Cat$ to the category $[\Gamma,\Cat]$ of functors $\Gamma \to \Cat$ in $\CAT$ and natural transformations between them. Performing the Grothendieck construction on this functor, we get the Grothendieck fibration 
 \[ \pi_\Cat:  \Cat.[-,\Cat ] \to \Cat. \]
The objects of $\Cat. [-,\Cat ] $ are pairs $(\Gamma, T )$ where $\Gamma$ is a category in $\Cat$ and $T: \Gamma \to \Cat $ is a functor. The morphisms $(\Gamma, T) \to (\Gamma', T')$ are pairs $(F,\tau)$ such that $F: \Gamma \to \Gamma'$ is a functor and $\tau: T \Rightarrow T' \circ F$ is a natural transformation. 
\[ \diagram 
{\ar@{=>}^\tau (5,-2)*{}; (5,-5)*{}}
\Gamma \ar[r]^T \ar[d]_F  & \Cat \\
\Gamma '  \ar[ur]_{T'}
\enddiagram \]

Now we define the functor $ \chi : \Cat. [-,\Cat ] \to \Cat^\two$ to be the Grothendieck construction. That is, for any object $(\Gamma, T)$ in $\Cat. [-,\Cat ]$, we have \[\chi(\Gamma, T)\ \ := \ \  \pi_\Gamma: \Gamma.T \to \Gamma.\] 
For any morphism $(F,\tau): (\Gamma, T) \to (\Gamma', T')$ we have that $\chi(F,\tau)$ is the morphism in $\C^\two$ with the following underlying commutative square in $\Cat$
\[ \diagram
\Gamma.T \ar[d]^{\pi_\Gamma} \ar[rr]^{\dom \chi(F,\tau)  }&& \Gamma'.T' \ar[d]^{\pi_{\Gamma'}} \\
\Gamma \ar[rr]^F && \Gamma'
\enddiagram \]
where $\dom \chi(F,\tau)$ is the functor which sends each $(X,Y)$ in $\Gamma.T$ to $(F(X), \tau_X(Y))$ in $\Gamma'.T'$.

By Lemma \ref{cartesianchi} below, the functor $\chi$ preserves cartesian morphisms.

Now it is straight-forward to check that the following diagram commutes.
\[ \diagram \Cat.[-,\Cat] \ar[rr]^\chi \ar[rd]_{\pi_\Cat}& & \Cat^\two \ar[dl]^\cod \\
& \Cat
\enddiagram \]
In particular, on objects, we have $\cod \chi (\Gamma, T) = \cod(\pi_\Gamma: \Gamma.T \to \Gamma) = \Gamma$
and $\pi_\Cat(\Gamma, T) = \Gamma$.
\end{construction}

\begin{lemma} \label{cartesianchi}
The functor $\chi$ preserves cartesian morphisms.
\end{lemma}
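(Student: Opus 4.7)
The plan is to proceed in three steps: characterize the cartesian morphisms of the fibration $\pi_\Cat$, identify the relevant pullback in $\Cat$ explicitly, and verify that the canonical comparison is an isomorphism of categories.

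First, from the description of $\pi_\Cat: \Cat.[-,\Cat] \to \Cat$ as the Grothendieck fibration arising from the contravariant functor $[-,\Cat]: \Cat^\ops \to \Cat$, a morphism $(F,\tau): (\Gamma, T) \to (\Gamma', T')$ is cartesian precisely when $\tau: T \Rightarrow T' \circ F$ is an isomorphism in $[\Gamma, \Cat]$, i.e., each component $\tau_X: T(X) \to T'(F(X))$ is an isomorphism of categories. The canonical cartesian lift of $F$ at $(\Gamma', T')$ is $(F, 1_{T' \circ F}): (\Gamma, T' \circ F) \to (\Gamma', T')$, and every cartesian morphism over $F$ is obtained from this one by composition with a vertical isomorphism.

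Second, a direct inspection shows that the pullback of $\pi_{\Gamma'}: \Gamma'.T' \to \Gamma'$ along $F: \Gamma \to \Gamma'$ in $\Cat$ is the category $\Gamma.(T' \circ F)$, with first projection $\pi_\Gamma$ and second projection $(X, Y') \mapsto (F(X), Y')$. Since a cartesian morphism in $\Cat^\two$ with respect to $\cod$ is precisely a pullback square in $\Cat$, it then suffices to show that when $\tau$ is a natural isomorphism, the commutative square defining $\chi(F,\tau)$ is a pullback, or equivalently that the canonical comparison functor $\Gamma.T \to \Gamma.(T' \circ F)$ is an isomorphism of categories.

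Unpacking, the comparison functor sends $(X, Y) \mapsto (X, \tau_X Y)$ on objects, and a morphism $(f, g): (X,Y) \to (\bar X, \bar Y)$ in $\Gamma.T$ to $(f, \tau_{\bar X}(g))$, where naturality of $\tau$ at $f$ identifies $\tau_{\bar X}(T(f)(Y))$ with $T'(F(f))(\tau_X Y)$ so that the second coordinate has the correct domain. The key calculation is then to exhibit an inverse functor $(X, Y') \mapsto (X, \tau_X^{-1} Y')$ and verify its functoriality using naturality of $\tau^{-1}$; bijection on objects and full faithfulness follow from the corresponding properties of each iso $\tau_X$. The main obstacle is purely bookkeeping: tracking the opposite variances (contravariant in $[-,\Cat]$ when forming $\Cat.[-,\Cat]$, but covariant in $T$ when forming $\Gamma.T$) and applying the naturality squares of $\tau$ and $\tau^{-1}$ at the right places; no conceptual input beyond the standard theory of the Grothendieck construction is required.
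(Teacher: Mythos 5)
Your proof is correct, but it takes a genuinely different route from the paper. You first invoke the standard characterization of cartesian morphisms in the Grothendieck construction $\Cat.[-,\Cat]$ (namely $(F,\tau)$ is cartesian iff $\tau$ is a natural isomorphism, since $(F,\tau)$ factors as the canonical lift $(F,1_{T'\circ F})$ composed with the vertical morphism $(1_\Gamma,\tau)$), then observe that the strict pullback of $\pi_{\Gamma'}\colon \Gamma'.T'\to\Gamma'$ along $F$ is exactly $\Gamma.(T'\circ F)$, and finally check that the canonical comparison $\Gamma.T\to\Gamma.(T'\circ F)$, $(X,Y)\mapsto(X,\tau_X Y)$ and $(f,g)\mapsto(f,\tau_{\bar X}(g))$, is an isomorphism because each $\tau_X$ is; the naturality bookkeeping you describe is exactly what is needed, and the conclusion then follows from the fact (stated in the paper) that pullback squares are $\cod$-cartesian. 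The paper instead argues directly from the universal property: given a morphism $\langle\alpha,\beta\rangle\colon Z\to\pi_{\Gamma'}$ in $\Cat^\two$ and a factorization $\gamma$ of $\beta$ through $F$, it constructs the unique filler $\delta\colon Z_0\to\Gamma.T$ by testing the cartesianness of $(F,\tau)$ in $\Cat.[-,\Cat]$ against objects of the form $(Z_0,c_{*,Z_0})$ with constant terminal fiber, so it never uses the characterization of cartesian maps by invertibility of $\tau$. Your route is more modular and yields directly that $\chi(F,\tau)$ is a pullback square, which is literally the condition appearing in the definition of a comprehension category; its cost is that it outsources two routine but unproved facts (the characterization of cartesian morphisms in $\Cat.[-,\Cat]$ and the identification of the pullback with $\Gamma.(T'\circ F)$), whereas the paper's argument is self-contained, at the price of more diagram chasing and of leaving the action of the filler $\delta$ on morphisms largely implicit.
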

\begin{proof}
Consider a cartesian morphism $(F, \tau): (\Gamma, T) \to (\Gamma, T')$ in $\Cat.[-,\Cat]$.
\[ \diagram 
{\ar@{=>}^\tau (5,-2)*{}; (5,-5)*{}}
\Gamma \ar[r]^T \ar[d]_F  & \Cat \\
\Gamma '  \ar[ur]_{T'}
\enddiagram \]
Applying $\chi$ to this functor, we obtain the following morphism in $\Cat^\two$, which is a commutative square in $\Cat$. 
\[ \diagram
\Gamma.T \ar[d]^{\pi_\Gamma} \ar[rr]^{\dom \chi(F,\tau)  }&& \Gamma'.T' \ar[d]^{\pi_{\Gamma'}} \\
\Gamma \ar[rr]^F && \Gamma'
\enddiagram \]
We want to show that this is cartesian with respect to $\cod: \C^\two \to \C$.

So consider a morphism $\langle \alpha, \beta \rangle: Z \to \pi_{\Gamma'}$ in $\Cat^\two$ and a morphism $\gamma: Z_1 \to \Gamma$ in $\Cat$ such that $F \gamma = \beta$ as illustrated below.
\[ \diagram
Z_0 \ar@{-->}[r]^\delta \ar[d]^Z  \ar@/^20pt/[rrr]^\alpha & \Gamma.T \ar[d]^{\pi_\Gamma} \ar[rr]^{\dom \chi(F,\tau)  }&& \Gamma'.T' \ar[d]^{\pi_{\Gamma'}} \\
Z_1 \ar@/_20pt/[rrr]_\beta \ar[r]^{\gamma} & \Gamma \ar[rr]^F && \Gamma'
\enddiagram \tag{$*$} \]
We need to show that there is a unique $\delta: Z_0 \to \Gamma.T$ which makes the diagram commute.

Consider the object $(Z_0, c_{*,Z_0})$ in $\Cat.[-,\Cat]$ (where $ c_{*,Z_0}$ is the functor $Z_0 \to \Cat$ which sends everything to the terminal category).
Then there is a morphism $(\beta Z, \overline \alpha):
(Z_0, c_{*,Z_0}) \to (\Gamma', T')$ where $\overline \alpha_z$ is the functor $* \to T'F\beta Z(z)$ which picks out the object $\alpha(z)$ for any object $z \in Z_0$.
\[ \xymatrix@C=6em{
{\ar@{=>}^\tau (12,-16)*{}; (12,-20)*{}}
{\ar@{=>}^{\overline \delta} (12,-6)*{}; (12,-10)*{}}
{\ar@{=>}^<<<<{\overline \alpha} (5,-5)*{}; (5,-21)*{}}
Z_0 \ar@/^/[dr]^{c_{*,Z_0}} \ar[d]_{\beta Z}\\
\Gamma \ar[r]|<<<<\hole^>>>>>>>T \ar[d]_F  & \Cat \\
\Gamma '  \ar@/_/[ur]_{T'}
} \]
Since $(F,\tau)$ is cartesian, there is a unique natural transformation $\overline \delta: c_{*,Z_0} \Rightarrow T\beta Z$ such that $\overline \alpha = \tau \overline \delta$. This gives rise to the functor $\delta: Z_0 \to \Gamma.T$ by setting $\delta(z) := (\Gamma, \overline \delta_z(*))$. This makes the diagram ($*$) above commute. 

If there were another such functor $\delta'$ making the diagram ($*$) commute, by making an analogous argument, we would find a natural transformation $\overline \delta'$. But then $\overline \delta' = \overline \delta$, and so $\delta' = \delta$. Thus, $\delta$ is the unique functor making the diagram ($*$) commute.

Thus, $\chi(F, \tau)$ is cartesian for every cartesian $(F, \tau)$ so $\chi$ is cartesian.
\end{proof}

This comprehension category has a unit $\eta: \Cat \to \Cat.[-,\Cat]$ which takes a category $\Gamma$ to $(\Gamma, *_\Gamma: \Gamma \to \Cat)$ where $*_\Gamma$ is the functor which is constant at the terminal category $*$. Then $\chi (\Gamma, *_\Gamma)$ is an isomorphism $\pi_\Gamma : \Gamma. *_\Gamma \to \Gamma$. Recall that for any $(\Gamma, T)$ in $\Cat.[-,\Cat]$, we interpret the terms of $T$ as sections of $\chi(\Gamma, T)$. These are in bijection with functors $ \Gamma. *_\Gamma \to \dom \chi(\Gamma, T)$ over $\Gamma$ (where the functor $\dom$ sends a functor to its domain). These are in bijection with morphisms $(\Gamma, *_\Gamma) \to (\Gamma, T)$ in the fiber $\pi_\Cat^{-1} \Gamma$ which are just morphisms $*_\Gamma  \to  T$ in the category $[\Gamma, \Cat]$. 

In the sections that follow, we will interpret contexts as categories $\Gamma$ in $\Cat$, we will interpret types in context $\Gamma$ as objects of the category $[\Gamma, \Cat]$ (which are functors $\Gamma \to \Cat$), and we will interpret terms of such an object $T$ in $[\Gamma, \Cat]$ as morphisms from $*_\Gamma \to T$ in $[\Gamma, \Cat]$ (which are natural transformations $*_\Gamma \Rightarrow T$). 

\subsection{\textsf{core} and \textsf{op}}

In this section, we interpret the type formers $\core$ and $\op$ and the functions $\sci$ and $\sci^\op$ into our comprehension category. 

\begin{prob}\label{coreop}
To construct functorially, from each functor $\T: \Gamma \to \Cat$, functors $\T^\cores: \Gamma \to \Cat$ and $\T^\ops: \Gamma \to \Cat$, natural transformations $i: \T^\cores \Rightarrow T$ and $i^\ops : \T^\cores \Rightarrow \T^\ops$.
\end{prob}

\begin{example}
Before we give the construction, we describe what it will be in the empty context.

Consider a type $\C$ in the empty context, which is just a category. Then $\C^\cores$ will be just the set\footnote{In this note, we will often talk about a \emph{set} $X$ in the category $\Cat$. To be clear, we mean the category whose objects are the elements of $X$ and whose morphisms are only identity morphisms.} of objects of $\C$ and $\C^\ops$ will be the usual opposite category of $\C$. Then $i: \C^\cores \to \C$ and $i^\ops: \C^\cores \to \C^\ops$ will be the injections which are the identity on objects.
\end{example}

\begin{construction}{coreop}
Recall the two oft-encountered endofunctors $(-)^\cores$ and $(-)^\ops$ on $\Cat$. For any category $\C$, the category  $\C^\cores$ is the set $\ob(\C)$.
The objects of $\C^\ops$ are the objects of $\C$, and for all objects $X$, $Y$ of $\mathcal C^\ops$, the homomorphisms $X \to Y$ in $\C^\ops$ are the homomorphisms $Y \to X$ in $\C$. There are moreover natural transformations $i: (-)^\cores \Rightarrow id_\Cat $ and $i^\op: (-)^\cores \Rightarrow (-)^\ops $ induced by the identity function on objects. 

Postcomposition with the functors $(-)^\cores$ and $(-)^\ops$ gives two functors which take any $\T : \Gamma \to \Cat$ to $\T^{\cores}: \Gamma \to \Cat$ and $\T^{\ops}: \Gamma \to \Cat$, respectively. 

Whiskering with $i$ (resp. $i^\ops$) gives a functor which takes any $\T : \Gamma \to \Cat$ to the natural transformation $i: \T^\cores \Rightarrow \T$ (resp. $i^\ops: \T^\cores \Rightarrow \T^\ops$).
\end{construction}

\subsection{$\homt$ {formation}}
In this section, we interpret the $\homt$ {formation} rule.
\begin{prob}\label{formation}
To functorially construct for each functor $\T: \Gamma \to \Cat$ and natural transformations $s: *_\Gamma \Rightarrow \T^\op , t: *_\Gamma \Rightarrow \T$, a functor $\homs_\T(s,t): \Gamma \to \Cat$.
\end{prob}

\begin{example}
In the empty context, our problem reduces to finding for each category $\C$ and object $(x,y) \in \C^\ops \times \C$, a category $\homs_\C(x,y)$. Our construction will produce the usual homset $\homs_\C(x,y)$.
\end{example}

\begin{construction}{formation}
Let $(-)^\ops \times id_\Cat: \Cat \to \CAT$ denote the functor which takes any $\C \in \Cat$ to $\C^\op \times \C$ in $\CAT$. Then let $c_\Set: \Cat \to \CAT$ be the constant functor at $\Set$, the category of $\kappa$-small sets. 

There is a lax natural transformation $\homs: (-)^\ops \times id_\Cat \Rightarrow c_\Set$ which at each $\C \in \Cat$ is the usual functor $\homs_\C: \C^\ops \times \C \Rightarrow \Set$ (which maps an object $(X,Y) \in \C^\ops \times \C$ to the set $\homs_\C(X,Y)$ of morphisms $X \to Y$ in $\C$). This transformation is not natural since for a functor $F: \C \to \D$, the square of functors might not commute. It is however a \emph{lax} natural transformation since there is a natural transformation $\homs_F: \homs_\C \to \homs_\D(F^\ops \times F)$ which at each $(X, Y) \in \C^\ops \times \C$ is the function $ \homs_\C(X,Y) \to \homs_\D(F^\ops X , FY)$, the morphism part of the functor $F$.
\[ \diagram
\C^\op \times \C \ar[d]_{\homs_\C}^>>>>>>{\ \ \ \ \ \xLongrightarrow{\homs_F}} \ar[r]^{F^\op \times F}  & \D^\op \times \D \ar[d]^{\homs_\D} 
\\
 \Set \ar@{=}[r] & \Set 
\enddiagram
\]

Now for any functor $\T: \Gamma \to \Cat$ and natural transformations $s: *_\Gamma \Rightarrow \T^\op , t: *_\Gamma \Rightarrow \T$, we have the following diagram of functors and (lax) natural transformations.
\[ \xymatrix@=5em{
\Gamma \rruppertwocell<10>^{\ *_\Gamma \ }{(s,t)\ \ \ \ \ \ \ \ \ \  } \ar[r]_-\T &{ \Cat} \ar[r]^{(-)^\ops \times id \ \ } \rlowertwocell<-10>_{c_\Set}{\ \ \ \homs}&{ \CAT}\\
} \tag{$*$}\]

By pasting together the natural transformation $(s,t)$ and the lax natural transformation $\homs$ in the diagram above, we obtain a lax natural transformation $\homs_\T(s,t): *_\Gamma \Rightarrow c_\Set  \T$. 
At every object $\gamma \in \Gamma$, this gives a functor $\homs_\T(s,t)\gamma: * \to \Set$ which is just a set $\homs_\T(s,t)\gamma(*)$. For every morphism $\phi: \gamma \to \gamma'$ in $\Gamma$, this produces a natural transformation  
\[ \xymatrix@C=5em{
{*} \ar[d]_{\homs_\T(s,t)\gamma}^{\ \ \ \xLongrightarrow{\homs_\T(s,t)\phi}} \ar@{=}[r] & {*} \ar[d]^{\homs_\T(s,t)\gamma'} \\
\Set \ar@{=}[r]  & \Set
} \]
which is just a function $\homs_\T(s,t)\phi_*: \homs_\T(s,t)\gamma(*) \to \homs_\T(s,t)\gamma'(*)$. In other words, $\homs_\T(s,t)$ is a functor $\Gamma \to \Cat$.
\end{construction}
\subsection{$\homt$ introduction}

In this section, we interpret the $\homt$ introduction rule.

\begin{prob}\label{introduction}

To functorially construct for each functor $\T: \Gamma \to \Cat$ and natural transformation $ t: *_\Gamma \Rightarrow \T^\cores$, a natural transformation $1_t: *_\Gamma \Rightarrow \homs_\T(i^\ops t,i t)$.
\end{prob}
\begin{example}
In the empty context, our problem reduces to finding for each category $\C$ and object $t \in \C^\cores$, an object $1_t \in \hom_T( t, t)$. Our construction will produce the usual identity morphism $t \to t$.
\end{example}
\begin{construction}{introduction}
For every category $\C$ and for every $t \in \C^\cores$ (that is, an object $t$ of $\C$), there is an element $1^\C_t \in \homs_\C(t,t)$, the identity morphism $t \to t $. Since this is (vacuously) natural in $t$, it assembles into the natural transformation shown below
\[ \xymatrix@C=5em{
\C^\cores \rtwocell^{c_*}_{\homs_\C(i^\ops \times i)}{\ 1_\bullet^\C} & \Set
} \]
where $c_*$ is the constant functor at the terminal set. 

This is furthermore natural in $\C$, so it assembles into the following modification.
\[ \xymatrix@C=10em{ 
\Cat \ar@/^20pt/[rr]^{(-)^\cores}\ar@/_20pt/[rr]_{c_\Set} & & \CAT
{\ar@{=}@/^-15pt/_{c_*} (36,5)*{}; (36,-5)*{}}
{\ar@{=>}@/^-15pt/ (36,5)*{}; (37,-5.7)*{}}
{\ar@{=}@/^15pt/^{\homs_\C(i^\op \times i)} (45,5)*{}; (45,-5)*{}}
{\ar@{=>}@/^15pt/ (45,5)*{}; (44,-5.7)*{}}
{\ar@3^{1_\bullet} (37,0)*{}; (46,0)*{}}
} \]

Now, consider a functor $\T: \Gamma \to \Cat$ and a natural transformation $ t: *_\Gamma \Rightarrow \T^\cores$. We obtain a diagram 
\[ \xymatrix@C=10em{ \Gamma \ar[r]^\T \ar@<10pt>@/^30pt/[rrr]^{*_\Gamma} & 
\Cat \ar@/^20pt/[rr]^{(-)^\cores}\ar@/_20pt/[rr]_{c_\Set} & & \CAT
{\ar@{=}@/^-15pt/_{c_*} (76,5)*{}; (76,-5)*{}}
{\ar@{=>}@/^-15pt/ (76,5)*{}; (77,-5.7)*{}}
{\ar@{=}@/^15pt/^{\homs_\C(i^\op \times i)} (85,5)*{}; (85,-5)*{}}
{\ar@{=>}@/^15pt/ (85,5)*{}; (84,-5.7)*{}}
{\ar@{=>}^t (50,12)*{}; (50,6)*{}}
{\ar@3^{1_\bullet} (77,0)*{}; (86,0)*{}}
} \]
Then pasting together this diagram, we obtain 
a modification $c_* \Rrightarrow \hom_\C(i^\ops t, i t)$ which unfolds into a natural transformation $1_t: * \Rightarrow \hom(\iota^\op t,\iota t)$. This natural transformation has at each $\gamma \in \Gamma$ the component $1_{t \gamma}: * \to \hom( t \gamma, t \gamma)$ which picks out the identity morphism $t \gamma \to t \gamma$.
\end{construction}

\subsection{\textsc{Right} $\homt$ {elimination} and {computation}}

In this section, we interpret the right $\homt$ elimination rule so that it satisfies the strict equality required by the right $\homt$ computation rule.

\begin{prob}\label{rightelim}
To functorially construct for 
each functor $\T: \Gamma \to \Cat$, each functor $\Theta: \Gamma.\T^\cores \to \Cat$,
each functor $\D: \Gamma.\T^\cores . \T. \homs_\T (i^\ops \times 1).\Theta \to \Cat$, and 
each natural transformation 
\[d: *_{\Gamma.\T^\core.\Theta} \Rightarrow \D\circ 1_\bullet.\Theta:\Gamma.\T^\core.\Theta \to \Cat,\] a natural transformation 
\[e_r(d): *_{ \Gamma.\T^\cores . \T. \homs_\T (i^\ops \times 1).\Theta} \Rightarrow \D:  \Gamma.\T^\cores . \T. \homs_\T (i^\ops \times 1).\Theta \to \Cat,\] such that $e_r(d)\circ 1_\bullet.\Theta = d$.
\end{prob}

\begin{example}
Where $\Gamma$ and $\Theta$ are both empty contexts, our problem reduces to finding for 
each category $\C$, 
each functor 
$\D: \C^\cores.\C.\homs_\C(i^\ops \times 1) \to \Cat$,
and each natural transformation 
$d: *_{\C^\cores} \Rightarrow \D\circ 1_\bullet: \C^\core \to \Cat,$ a natural transformation 
\[e_r(d): *_{\C^\cores.\C.\homs_\C(i^\ops \times 1)} \Rightarrow \D:   \C^\cores.\C.\homs_\C(i^\ops \times 1) \to \Cat,\] 
such that $e_r(d) 1_\bullet = d$.

The objects of $\C^\cores.\C.\homs_\C(i^\ops \times 1)$ are triples $(X,Y,f)$ such that $f: X \to Y$ is a morphism in $\C$. Its morphisms are of the form $g: (X,Y,f) \to (X,Y',g \circ f)$ where $g:Y \to Y'$ is a morphism of $\C$.  

Since $e_r(d) 1_\bullet$ must be $d$, we know that $e_r(d)_{(X,X,1_X)}$ must be $d(X)$ for any object $X$ of $\C^\cores$.
Now for any other object $(X,Y,f)$ of $\C^\cores.\C.\homs_\C(i^\ops \times 1)$, there is a morphism $f: (X,X,1_X) \to (X,Y,f)$.
Then we get a morphism $\D(f) : \D(X,X,1_X) \to \D(X,Y,f)$. We apply this functor to $d(X)$ to obtain $e_r(d)_f$. That is, we set
\[ e_r(d)_f := \D(f) \left( dX \right).\]

\begin{construction}{rightelim}
Consider the data given in the Problem.
The objects of the category $\Gamma.\T^\cores . \T. \homs_\T (i^\ops \times 1).\Theta$ are quintuples $(\gamma, s,t, f, \theta)$ where $\gamma$ is an object of $ \Gamma$, $f: s \to t$ is a morphism in the category $\T(\gamma)$, and $\theta$ is an object of $ \Theta(\gamma, s)$.
Morphisms in $\Gamma.\T^\cores . \T. \homs_\T (i^\ops \times 1).\Theta$ are of the form 
\[(\phi, \alpha, \beta) : (\gamma, s,t, f, \theta) \to (\gamma', \ \T(\phi)(s) ,\ t', \ \alpha \circ \T(\phi)(f),\ \theta') \tag{$*$}\]
where $\phi: \gamma \to \gamma'$ is a morphism in $\Gamma$, $\alpha : \T(\phi)(t) \to t'$ is a morphism in $\T(\gamma')$, and $\beta: \Theta(\phi)(\theta) \to \theta'$ is a morphism in $\Theta(\gamma', \T(\phi)(s))$.

The objects of the category $\Gamma.\T^\cores.\Theta$ are triples $(\gamma, t, \theta)$ where $\gamma \in \Gamma$, $t \in \T^\cores(\gamma)$, and $\theta \in \Theta(\gamma,t)$. The morphisms are of the form $(\phi,\beta): (\gamma, t,\theta) \to (\gamma', T(\phi)(t) ,\theta')$ where $\phi: \gamma \to \gamma'$ is a morphism in $\Gamma$ and $\beta: \Theta(\phi)\theta \to \theta'$ is a morphism in $\Theta(\gamma', T(\phi)(t))$.

The functor $1_\bullet.\Theta: \Gamma.T^\core.\Theta \to \Gamma.\T^\cores . \T. \homs_\T (i^\ops \times 1).\Theta$ takes a triple $(\gamma, t, \theta)$ to the quintuple $(\gamma, t,t, 1_t, \theta)$.

Now, to construct the natural transformation
\[e_r(d): * \Rightarrow \D:  \Gamma.\T^\cores . \T. \homs_\T (i^\ops \times 1).\Theta \to \Cat,\]
we need to first specify all of its components $e_r(d)_{(\gamma, s,t, f,\theta)}: * \to \D(\gamma, s,t, f,\theta)$.

So fix an object $(\gamma, s,t, f,\theta)$ of the category $ \Gamma.\T^\cores . \T. \homs_\T (i^\ops \times 1).\Theta$. To construct a functor 
$e_r(d)_{(\gamma, s,t, f,\theta)}: * \to \D(\gamma, s,t, f,\theta)$ is to find an object of the category $\D(\gamma, s,t, f,\theta)$.
Consider the morphism $(1_\gamma, f,1_\theta): (\gamma, s,s, 1_s,\theta) \to (\gamma, s,t, f,\theta)$ in $ \Gamma.\T^\cores . \T. \homs_\T (i^\ops \times 1).\Theta$. By applying the functor $\D$, we get a functor  
\[\D(1_\gamma, f,1_\theta): \D\circ 1_\bullet (s).\Theta = \D(\gamma, s,s, 1_s,\theta) \to \D (\gamma, s,t, f,\theta). \]
Now we apply this functor to the object $d(\gamma,s,\theta)(*)$ of $\D\circ 1_\bullet (\gamma, s,\theta)$ to get $e_r(d)_{(\gamma, s,t, f, \theta)} $ as follows:
\[ e_r(d)_{(\gamma, s,t, f, \theta)}(*) := \D(1_\gamma,f,1_\theta) d(\gamma, s,\theta)(*). \]

We need to check that this choice is natural, which amounts to showing that for any morphism in $ \Gamma.\T^\cores . \T. \homs_\T (i^\ops \times 1).\Theta$ as in line ($*$) above, we have the equality
\[ \D(\phi, \alpha,\beta) e_r(d)_{(\gamma, s,t, f,\theta)} =   e_r(d)_{(\gamma', T(\phi)( s) ,t', \alpha  T(\phi)(f),\theta')} .\]
But we have 
 \begin{align*}
 \D (\phi, \alpha,\beta) e(d)_{(\gamma, s,t, f,\theta)} &=  \D (\phi, \alpha,\beta) \D (1_\gamma,f,1_\theta)d(\gamma, s,\theta) \\
&= \D (\phi, \alpha T(\phi)(f), \beta)d(\gamma, s, \theta)\\
&= \D (1_{\gamma'},\alpha  T(\phi)(f),1_\theta')  \D (\phi, 1_{T(\phi)s},\beta) d(\gamma,s,\theta)\\
 &= \D (1_{\gamma'},\alpha  T(\phi)(f),1_\theta')d(\gamma', T(\phi)(s),\theta') \\
&=  e_r(d)_{(\gamma', T(\phi)( s) ,t', \alpha  T(\phi)(f),\theta')}
\end{align*} 
where the first equality and last equality come from our definition of $e_r(d)$, the second and third from composition in the category $ \Gamma.\T^\cores . \T. \homs_\T (i^\ops \times 1).\Theta$, and the forth from the naturality of $d$.

Now we only need to check that $e_r(d)\circ 1_\bullet.\Theta = d$. For any $(\gamma, t,\theta)$ of $\Gamma.T^\core.\Theta$, we have that 
\begin{align*}
e_r(d) {1_\bullet}.\Theta {(\gamma,t.\theta)} &= e_r(d) {(\gamma, t,t,1_t,\theta)} \\
&= \D (1_\gamma, 1_t,1_\theta)d(\gamma, t,\theta) \\
&= d(\gamma,t,\theta)
\end{align*} 
where the first equality is the definition of $1_\bullet$, the second is the definition of $e_r(d)$, and the third is the functoriality of $\D $.
Therefore, we have that $e_r(d) 1_\bullet = d$.
\end{construction}
\end{example}

\subsection{{Left} $\homt$ {elimination} and {computation}}

In this section, we interpret the left $\homt$ elimination rule so that it satisfies the strict equality required by the left $\homt$ computation rule. However, it just takes an application of right $\homt$ elimination and computation.

\begin{prob}\label{leftelim}
To functorially construct for 
each functor $\T: \Gamma \to \Cat$, each functor $\Theta: \Gamma.\T^\cores \to \Cat$,
each functor $\D: \Gamma.\T^\ops . \T^\cores. \homs_\T (1 \times i).\Theta \to \Cat$, and 
each natural transformation 
\[d: *_{\Gamma.\T^\core.\Theta} \Rightarrow \D\circ 1_\bullet.\Theta:\Gamma.\T^\core.\Theta \to \Cat,\] a natural transformation 
\[e_r(d): *_{ \Gamma.\T^\ops . \T^\cores. \homs_\T (1 \times i).\Theta} \Rightarrow \D:  \Gamma.\T^\ops . \T^\cores. \homs_\T (1 \times i).\Theta \to \Cat,\] such that $e_r(d)\circ 1_\bullet.\Theta = d$.
\end{prob}

\begin{construction}{leftelim}
Note that $(\T^\ops)^\cores = \T^\cores$,
and \[\Gamma.(\T^\ops)^\cores . \T^\ops. \homs_{\T^\ops} (i \times 1).\Theta \cong \Gamma.\T^\ops . \T^\cores. \homs_\T (1 \times i).\Theta\]
 and $(T^\ops)^\cores = T^\cores$. Then by substituting $T^\ops$ in for $T$ everywhere in the statement of the Problem \ref{rightelim}, we find that our problem has already been solved.
\end{construction}

\subsection{The homotopy theory}
In this section, we describe the (directed) homotopy theory that this interpretation inhabits in $\Cat$.

To understand this homotopy theory in $\Cat$, first recall that there are two weak factorization systems on $\Cat$ \cite{Nor}. We begin with the following two factorizations of the fold map
\[ \diagram
* + * \ar[r]^-{0 + 1}& \two \ar[r]^{!} & *
\enddiagram \tag{$\rightarrow$} \]\[
\diagram
* + * \ar[r]^-{0 + 1}& \mathbb I \ar[r]^{!} & *
\enddiagram \tag{$\cong$} \]
where $\two$ is the category generated by one morphism $0 \to 1$, $\mathbb I$ is the category generated by one isomorphism $0 \cong 1$. This gives us, by exponentiation, two functorial factorizations of the diagonal $\C \to \C \times \C$ of any category $\C$.
\[ 
\xymatrix@C=4em{
\C \ar[r]^-{\C^!} & \C^\two \ar[r]^-{\C^0 \times \C^1} & \C \times \C
} \tag{$\C^\rightarrow$}\]
\[ 
\xymatrix@C=4em{
\C \ar[r]^-{\C^!} & \C^{\mathbb I} \ar[r]^-{\C^0 \times \C^1} & \C \times \C
} \tag{$\C^{\cong}$}\]
Then, we can functorially factor any functor $F: \C \to \D$ in two ways.
\[ 
\xymatrix@C=4em{
\C \ar[r]^-{1 \times \D^! F} & \C \times_{\D^0} \D^\two \ar[r]^-{\D^1} & \D
} \tag{$\ell^\rightarrow, r^\rightarrow$}\]
\[ 
\xymatrix@C=4em{
\C \ar[r]^-{1 \times \D^! F} & \C \times_{\D^0} \D^{\mathbb I} \ar[r]^-{\D^1} & \D 
} \tag{$\ell^{\cong}, r^{\cong}$} \]
This gives the factorizations of two functorial weak factorization systems $(\mathcal L^\rightarrow, \mathcal R^\rightarrow)$ and $(\mathcal L^{\cong}, \mathcal R^{\cong})$ on $\Cat$.

We can characterize the right maps of these weak factorization systems in another way. The maps in $\mathcal R^\rightarrow$ are exactly those maps which have the enriched right lifting property against the inclusion of the domain into the generic morphism: $0:* \to \two$. The maps in $\mathcal R^{\cong}$ are exactly those maps which have the enriched right lifting property against the inclusion of the domain into the generic isomorphism: $0: * \to \mathbb I$.

While the usual identity type of Martin-L\"of type theory has an interpretation into the weak factorization system $(\mathcal L^{\cong}, \mathcal R^{\cong})$ \cite{Nor}, we claim that we have constructed in the preceding sections an interpretation into the weak factorization system $(\mathcal L^\rightarrow, \mathcal R^\rightarrow)$. By this, we just mean that (1) given any interpretation of a dependent type $\T: \Gamma \to \Cat$, its comprehension $\pi_\Gamma: \Gamma .T \to \Gamma$ is in $\mathcal R^\rightarrow$, and (2) the elimination property can be seen as the existence of a lift between a map of $\mathcal L^\rightarrow$ and a map of $\mathcal R^\rightarrow$.

To show that any comprehension $\pi_\Gamma: \Gamma.\T \to \Gamma$ is in $\mathcal R^\rightarrow$, note first that it is a Grothendieck opfibration.
\begin{prop}
Any Grothendieck opfibration $p: E \to B$ is in $\mathcal R^\rightarrow$.
\end{prop}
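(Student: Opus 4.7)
The plan is to exhibit any Grothendieck opfibration $p: E \to B$ as a retract, in the slice category $\Cat/B$, of its right factor under the functorial factorization $(\ell^\rightarrow, r^\rightarrow)$. Since $\mathcal R^\rightarrow$ is the right class of a weak factorization system, it is closed under retracts, so this will place $p$ in $\mathcal R^\rightarrow$ without any need to interpret the precise meaning of the enriched right lifting property.

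Applied to $p$, the factorization has middle object $E \times_B B^\two$, with $\ell: E \to E \times_B B^\two$ sending $E_0$ to $(E_0, 1_{p(E_0)})$ and $r: E \times_B B^\two \to B$ sending $(E_0, b: p(E_0) \to B_1)$ to $B_1$; by construction, $r$ lies in $\mathcal R^\rightarrow$. To construct the retraction, I would choose a normal cleavage of $p$ --- that is, for each $E_0 \in E$ and $b: p(E_0) \to B_1$ in $B$, a selected cocartesian lift $\bar b: E_0 \to Y_b$ over $b$, chosen so that the selected lift of any identity is itself an identity. Then define $s: E \times_B B^\two \to E$ by $s(E_0, b) := Y_b$ on objects; on a morphism $(\phi, \delta): (E_0, b) \to (E'_0, b')$ of the pullback, let $s(\phi, \delta)$ be the unique morphism $Y_b \to Y_{b'}$ supplied by the cocartesian universal property of $\bar b$, factoring $\bar{b'} \circ \phi$ through $\bar b$ over $\delta$. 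Strict functoriality of $s$ is then automatic from the uniqueness clause in the cocartesian universal property, without the cleavage having to constitute a strict splitting. The retract identities $s \circ \ell = 1_E$ and $p \circ s = r$ follow from the normality of the cleavage and from the fact that $p \bar b = b$ for each selected lift, respectively.

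I do not anticipate a genuine obstacle here. The key observation is that strict pseudofunctoriality of the cleavage is not required --- the uniqueness of cocartesian fillers already forces $s$ to be strictly functorial. Existence of a normal cleavage for any Grothendieck opfibration is a standard consequence of the definition together with global choice in $\CAT$, so the proof reduces to unpacking the factorization, writing down $s$, and invoking the retract-closure of $\mathcal R^\rightarrow$.
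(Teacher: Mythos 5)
Your proof is correct and is essentially the paper's argument in different packaging: the retraction $s\colon E\times_B B^\two\to E$ you build from a normal cleavage, with functoriality forced by cocartesian uniqueness, is exactly the diagonal filler the paper constructs in the lifting square of $p$ against its own left factor $1\times B^!p$. Passing through ``retract of the right factor plus retract-closure of $\mathcal R^\rightarrow$'' rather than reading off membership in $\mathcal R^\rightarrow$ directly from that lift is just the standard retract argument, so there is no substantive difference.
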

\begin{proof}
We need to show that there is a lift $\ell$ in the following square.
\[ \diagram
E \ar[d]_{1 \times B^!p}\ar@{=}[r]& E \ar[d]^p \\
E \times_{B^0} B^\two \ar@{-->}[ur]^\ell \ar[r]_-{B^1} & B
\enddiagram \]

An object of $E \times_{B^0} B^\two$ is a pair $(e, f)$ where $e$ is an object of $E$ and $f:pe \to b$ is a morphism in $B$. By the definition of Grothendieck opfibration, there is a lift of $f$, which we will denote $\widetilde{(e,f)}: e \to \ell(e,f)$ such that $p \widetilde{(e,f)} = p$. We will choose $\widetilde{(e,1_e)}$ to always be $1_e$.

A morphism $(e_0, f_0) \to (e_1, f_1)$ of $E \times_{B^0} B^\two$ is a pair $(\epsilon, \phi)$ where $\epsilon: e_0 \to e_1$ is morphism of $E$ and $\phi: b_0 \to b_1$ is a morphism of $B$ which makes the following square commute in $B$.
\[ \diagram 
pe_0 \ar[d]^{f_0} \ar[r]^{p \epsilon} &pe_1  \ar[d]^{f_1}\\
b_0  \ar[r]^\phi & b_1
\enddiagram \]
Now, in $E$ we have the following diagram of solid arrows.
\[ \diagram 
e_0 \ar[d]_{\widetilde{(e_0,f_0)}} \ar[r]^{ \epsilon} &e_1  \ar[d]^{\widetilde{(e_1,f_1)}}\\
\ell(e_0,f_0) \ar@{-->}[r]  & \ell(e_1,f_1)
\enddiagram \]
By the definition of Grothendieck opfibration, we obtain a unique morphism $\ell(e_0,f_0) \to \ell(e_1,f_1)$, which we will denote $\ell(\epsilon, \phi)$, which makes the above diagram into a commutative square. 

It is straightforward to check that $\ell$ is a functor which makes the first diagram of this proof commute.
\end{proof}

Now the right elimination property (in the empty context) says that for any category $\C$, any functor $\D: \C^\core.\C.\homs_\C \to \Cat$, and any functor $d: \C^\core \to \C^\core.\C.\homs_\C.\D $ making the following square commute, there is a lift $\ell$ making both triangles commute.
\[ \diagram
\C^\core \ar[r]^-d \ar[d]^{1_\bullet} & \C^\core.\C.\homs_\C.\D \ar[d]^\pi \\
\C^\core.\C.\homs_\C \ar@{=}[r] \ar@{-->}[ur]^\ell & \C^\core.\C.\homs_\C
\enddiagram \]
We want to see that this arises as the solution of a lifting problem between a left map and a right map of our weak factorization system. We just saw that $\pi$ is a right map, so now we prove that $1_\bullet$ is a left map.

\begin{prop}
For any category $\C$, the functor 
\[ 1_\bullet : \C^\core \to \C^\core.\C.\homs_\C \]
is in $\mathcal L^\rightarrow$.
\end{prop}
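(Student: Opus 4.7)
The plan is to realise $1_\bullet$ as a retract of the left factor in its own $(\mathcal{L}^\rightarrow, \mathcal{R}^\rightarrow)$-factorisation. Since $\mathcal{L}^\rightarrow$ contains each such left factor and is closed under retracts, this will suffice; no appeal to the preceding proposition is needed.

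Write $\D := \C^\core.\C.\homs_\C$. First I would unpack the factorisation as applied to $F = 1_\bullet$: the map $\ell^\rightarrow_{1_\bullet}: \C^\core \to \C^\core \times_{\D^0} \D^\two$ sends $X$ to the pair $(X, 1_{(X,X,1_X)})$, and $r^\rightarrow_{1_\bullet} = \D^1$ is the codomain projection. Next I would define a candidate section $\iota : \D \to \C^\core \times_{\D^0} \D^\two$ by
\[
\iota(X,Y,f) \;:=\; \bigl(X,\; \phi_f\bigr),
\]
where $\phi_f$ is the (unique) morphism $(X,X,1_X) \to (X,Y,f)$ in $\D$, namely $f$ itself viewed as a morphism in the coslice fibre over $X$. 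On a morphism $g:(X,Y,f) \to (X,Y',gf)$ of $\D$, $\iota(g)$ is the evident commuting square in $\D$ with top edge $1_{(X,X,1_X)}$, bottom edge $g$, and vertical edges $\phi_f$, $\phi_{gf}$. Functoriality of $\iota$ reduces to composition in $\C$, and there is no choice involved since $\hom_\D((X,X,1_X),(X,Y,f))$ is a singleton.

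Then I would verify the retract diagram
\[ \diagram
\C^\core \ar@{=}[r] \ar[d]_{1_\bullet} & \C^\core \ar@{=}[r] \ar[d]^{\ell^\rightarrow_{1_\bullet}} & \C^\core \ar[d]^{1_\bullet} \\
\D \ar[r]^-{\iota} & \C^\core \times_{\D^0} \D^\two \ar[r]^-{\D^1} & \D
\enddiagram \]
commutes with $\D^1 \circ \iota = 1_\D$. Both the left square ($\iota \circ 1_\bullet = \ell^\rightarrow_{1_\bullet}$, since each sends $X$ to $(X, 1_{(X,X,1_X)})$) and the outer composite ($\D^1 \circ \ell^\rightarrow_{1_\bullet} = 1_\bullet$, part of the factorisation) are immediate from the formulas, and $\D^1 \circ \iota = 1_\D$ holds on objects by definition of $\iota$ and on morphisms from the description of $\iota(g)$.

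No step is a real obstacle; the only subtlety is observing that a morphism in $\D$ out of $(X,X,1_X)$ is uniquely determined by its codomain, which is exactly what makes $\iota$ well-defined. Conceptually, this is the categorical counterpart of the familiar fact that the inclusion of an initial object is a left-anodyne map — here applied fibrewise to the coslices $X/\C$ that assemble into $\D$.
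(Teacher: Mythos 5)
Your argument is correct, but it takes a different route from the paper. You factor $1_\bullet$ itself through $\C^\core \times_{\D^0} \D^\two$ and exhibit $1_\bullet$ as a retract of its own left factor via the explicit section $\iota$; the verification is sound, and the key point you isolate --- that $\homt_\D\bigl((X,X,1_X),(X,Y,f)\bigr)$ is a singleton, so $\iota$ is forced --- is exactly right. The paper instead factors the \emph{inclusion} $i: \C^\core \to \C$ and produces an explicit isomorphism $\C^\core.\C.\homs_\C \cong \C^\core \times_{\C^0} \C^\two$ under which $1_\bullet$ is identified with the left factor $1 \times \C^! i$ on the nose, so $1_\bullet$ lies in $\mathcal L^\rightarrow$ simply because left factors do. The two proofs rest on the same combinatorial observation (an object $(X,Y,f)$ of $\C^\core.\C.\homs_\C$ is the same thing as a morphism of $\C$ with source $X$, equivalently a morphism of $\D$ with source $(X,X,1_X)$), but they package it differently: the paper's version needs only that left factors of the functorial factorization belong to $\mathcal L^\rightarrow$ and that $\mathcal L^\rightarrow$ is closed under isomorphism, and as a bonus it identifies the comprehension of the hom type with the standard factorization of $\C^\core \to \C$, which is conceptually useful elsewhere in the section; your version additionally invokes closure of $\mathcal L^\rightarrow$ under retracts, a standard property of weak factorization systems (and available for $(\mathcal L^\rightarrow,\mathcal R^\rightarrow)$ as constructed in the cited reference), but one you should state explicitly since the paper never records it. In exchange, your retract argument is more generic: it does not require spotting the relationship between $1_\bullet$ and the factorization of $i$, and it is the template that would apply verbatim to any map admitting such a section of the right factor of its own factorization.
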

\begin{proof}
Factor the functor $ i: \C^\core \to \C$. 
\[ \diagram
 \C^\core \ar[r]^-{1 \times \C^! i}   &\C^\core \times_{\C^0} \C^\two \ar[r]^-{\C^1} &\C 
 \enddiagram \]
 
 We claim that the middle object, $\C^\core \times_{\C^0} \C^\two$, is isomorphic to $\C^\core.\C.\homs_\C$ and this makes $1_\bullet$ isomorphic to $1 \times \C^! i$, the left factor of the factorization for the weak factorization system $(\mathcal L^\rightarrow, \mathcal R^\rightarrow)$. Thus, $1_\bullet$ will be in $\mathcal L^\rightarrow$.
 
 The objects of $\C^\core.\C.\homs_\C$ are pairs $((x,y),f)$ where $(x,y)$ is an object of $\C^\op \times \C$ and $f$ is morphism $x \to y$. 
 
 The objects of $\C^\core \times_{\C^0} \C^\two$ are pairs $(x, f)$ where $x$ is an object of $\C^\core$ (that is, $x$ is an object of $\C$) and $f$ is morphism with domain $x$.
 
On objects, define the functor $\alpha: \C^\core.\C.\homs_\C \to \C^\core \times_{\C^0} \C^\two$ by $\alpha((x,y),f) := (x,f)$ and its inverse by $\alpha^{-1} (x,f) := ((x, \cod f), f)$.

 The morphisms $((x,y),f) \to ((x,y'),f')$ of $\C^\core.\C.\homs_\C$ are morphisms $(1_x,g): (x,y) \to (x,y')$ of $\C^\op \times \C$ such that $gf = f'$.
 
      The morphisms $(x,f) \to (x,f')$ of $\C^\core \times_{\C^0} \C^\two$ are morphisms  $(1_x,g): f \to f'$ of $\C^\two$ (so that $gf = f'$).
      
      On morphisms, define the functor $\alpha$ and its inverse both by $(1_x,g) \mapsto (1_x,g)$.
      
      This gives an isomorphism $\alpha: \C^\core.\C.\homs_\C \cong \C^\core \times_{\C^0} \C^\two$.

Now, we just check that $\alpha 1_\bullet  = 1 \times C^! i$. For any object $x$ of $\C^\core$, we have that $\alpha 1_\bullet (x)$ and $1 \times C^! i(x)$ are both $(x,1_x)$ (and $\C^\core$ has no nonidentity morphisms).
 \end{proof}
 
 \section{Directed homotopy theory and concurrency} 
 \label{sec:dhott}
 In this section, we briefly describe some aspects of directed homotopy theory and its application to concurrency and indicate how our $\homt$ types can be intepreted in a category of directed spaces.
 
 \subsection{Background}
 In brief, directed homotopy theory is the study of spaces for which certain paths are singled out and given a direction \cite{Grandis}. For example, one might want to consider the circle $S^1$ in the category of topological spaces together with the information that any clockwise path is `allowed', but counter-clockwise paths are `not allowed'.

Such directed spaces appear in many applications, but perhaps most notably in the study of concurrent processes. A system of concurrent processes can be represented by a space whose points represent states of the system and whose directed paths represent possible executions of the system. For example, we might consider two processes $A$ and $B$ which both need to access the same two memory locations $m$ and $n$ in succession \cite{FGHMR}. In order to avoid a conflict, we allow each process to `lock' ($L$) and `unlock' ($U$) each memory location, and while one memory location is locked by one process, we disallow the other process to lock it. Then the set of allowed states in such a system can be seen to form a directed space $S$ (shaded in gray in Figure \ref{fig:swissflag}) whose underlying topological space is a subspace of the square $[0,1]^2$ and whose directed paths are those paths which are monotonically non-decreasing in both $A$ and $B$.
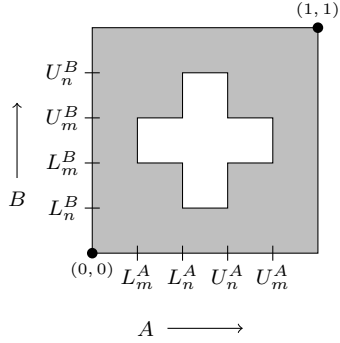
\begin{figure} 
\begin{tikzpicture}
\filldraw[fill=lightgray]
(0,0) -- (0,3) -- (3,3) -- (3,0) -- cycle 
(1.2,1.2) -- (1.2,0.6) -- (1.8,0.6) -- (1.8,1.2) -- (2.4,1.2) -- (2.4,1.8) -- (1.8,1.8) -- (1.8,2.4) -- (1.2,2.4) -- (1.2,1.8) -- (0.6,1.8) -- (0.6,1.8) -- (0.6,1.2) -- cycle;
  \draw[-] (0.6,-0.1) -- +(0,0.2) node[below=4pt] {\footnotesize $ L_m^A$};
    \draw[-] (-0.1,0.6) -- +(0.2,0) node[left=4pt] {\footnotesize $ L_n^B$};
    \draw[-] (1.2,-0.1) -- +(0,0.2) node[below=4pt] {\footnotesize $L_n^A$};
        \draw[-] (-0.1,1.2) -- +(0.2,0) node[left=4pt] {\footnotesize $ L_m^B$};
        \draw[-] (1.8,-0.1) -- +(0,0.2) node[below=4pt] {\footnotesize $U_n^A$};
            \draw[-] (-0.1,1.8) -- +(0.2,0) node[left=4pt] {\footnotesize $U_m^B$};
        \draw[-] (2.4,-0.1) -- +(0,0.2) node[below=4pt] {\footnotesize $U_m^A$};
                    \draw[-] (-0.1,2.4) -- +(0.2,0) node[left=4pt] {\footnotesize $U_n^B$};
\draw[->] (-1,1)  -- +(0,1) node[below=30pt] {\footnotesize $ B$};
\draw[->] (1,-1)  -- +(1,0) node[left=30pt] {\footnotesize $ A$};
\fill (0,0)  circle[radius=2pt] node[below] {\tiny $(0,0)$};
\fill (3,3)  circle[radius=2pt] node[above] {\tiny $(1,1)$};
  \end{tikzpicture}
  \caption{The Swiss flag}
    \label{fig:swissflag}
  \end{figure}

Now we can ask which states $(x,y)$ are \emph{reachable} by these two processes starting from the origin $(0,0)$ and which states $(x,y)$ are \emph{safe}, meaning that the end state $(1,1)$ can be reached from $(x,y)$. In this example, we see that the whole space but the square $[U_n^A, U_m^A] \times [U_m^B, U_n^B]$ is reachable and the whole space but the square $[L_m^A, L_n^A] \times [L_n^B, L_m^B]$ is safe.

\subsection{Future work: semantics in directed spaces}

In future work, we hope to construct an interpretation of $\homt$-types into a suitable category of directed spaces (see e.g., \cite{Grandis, Krishnan2015}). We roughly sketch that interpretation here.

For any directed space $X$, we let $X^\core$ and $X^\op$ have the same underlying space as $X$, we let $X^\core$ have only trivial directed paths, and we reverse the directed paths of $X$ to get the directed paths of $X^\op$. Then we let the underlying space of $\homt_X(x,y)$ be the space of directed paths from $x$ to $y$ in $X$, and we let a directed path of paths 
from $p$ to $q$ in $\homt_X(x,y)$ be a homotopy of paths from $p$ to $q$ which is directed in the homotopy coordinate. The difficult and interesting part of establishing such an interpretation will lie in constructing a universe in this category.

We end by demonstrating that in such an interpretation of $\homt$-types into a category of directed spaces, together with an interpretation of Martin-L\"of's $\Sigma$-types, we will be easily able to express the notions of reachable and safe states. Indeed, given a type $X$ and two terms $i,f$ meant to represent the initial and final state, we can define
\[ \mathtt{Reachable}(X):= \Sigma_{x \in X} \homt_X(i,x) \]
\[  \mathtt{Safe}(X):= \Sigma_{x \in X^\op} \homt_X(x,f). \]
Then, for our directed space $S$ shown above in Figure \ref{fig:swissflag}, we would have that $\mathtt{Reachable}(S)$ is the directed space of reachable states each with a witness that that state is reachable and $\mathtt{Safe}(X)$ would be the directed space of safe states each with a witness that that state is safe.

\bibliographystyle{alpha}
\bibliography{directedhott}

\end{document}